\newtheorem{Th}{Theorem}
\theoremstyle{remark}
\theoremstyle{definition}
\def\mb{\mathbf}
\def\uz{U_\mathcal Z}
\def\fid{|\phi_1\rangle}
\def\p{\mb p}
\def\tr{\textnormal{Tr}}
\def\ESL{\textnormal{ESL}}
\begin{document}

\title{Maximally informative ensembles for SIC-POVMs in dimension 3}
\author{Anna Szymusiak}
\address{Institute of Mathematics, Jagiellonian University, \L ojasiewicza 6, 30-348 Krak\'ow, Poland}
\email{anna.szymusiak@uj.edu.pl}

\begin{abstract}
In order to find out for which initial states of the system the uncertainty of the measurement outcomes will be minimal, one can look for the minimizers of the Shannon entropy of the measurement. In case of group covariant measurements this question becomes closely related to the problem how informative the measurement is in the sense of its informational power. Namely, the orbit under group action of the entropy minimizer corresponds to a maximally informative ensemble of equiprobable elements. We give a characterization of such ensembles for 3-dimensional group covariant (Weyl-Heisenberg) SIC-POVMs in both geometric and algebraic terms. It turns out that a maximally informative ensemble arises from the input state orthogonal to a subspace spanned by three linearly dependent vectors defining a SIC-POVM (geometrically) or from an eigenstate of certain Weyl's matrix (algebraically).

\end{abstract}
\maketitle

\section{Introduction}

The most general description of quantum measurement is given by \emph{positive operator valued measure} (POVM). Mathematically, in the finite case, it is a set of Hermitian positive semidefinite operators $\Pi_j$ summing up to identity $\sum_{j=1}^k\Pi_j=\mathbb I$. The outcomes of the measurement are given by $j\in \{1,\ldots,k\}$ and the probability of obtaining the $j$th outcome, if the system before measurement was in the state $\rho$, is given by $p_j(\rho,\Pi)=\tr(\rho\Pi_j)$. A POVM is called \emph{informationally complete} (IC POVM) if the probabilities $p_j(\rho,\Pi)$ allow us to determine completely the initial state $\rho$. IC POVMs were first introduced in \cite{Pru}. The minimal number of elements of an IC POVM is $d^2$, where $d$ is the dimension of the Hilbert space describing our quantum system. A \emph{symmetric informationally complete} POVM (SIC-POVM) consists of $d^2$ subnormalized rank-one projections $\Pi_j=|\phi_j\rangle\langle\phi_j|/d$ with equal pairwise Hilbert-Schmidt inner products: $\tr(\Pi_i^*\Pi_j)=|\langle\phi_i|\phi_j\rangle|^2/d^2=1/(d^2(d+1))$ for $i\neq j$. The notion of SIC-POVMs has been introduced by Renes et al. \cite{Ren}, but they have been studied previously by Zauner in his PhD Thesis \cite{Zau} under the name of \emph{regular quantum designs with degree $1$}. Since then they have been getting an increased attention in quantum community due to their applications in quantum state tomography \cite{Scott,ZhuEnglert}, quantum cryptography \cite{Renes} or quantum communication \cite{Ore}. The question, whether there exists a SIC-POVM in any finite dimension, is still open. 

We consider the group-covariant SIC-POVMs with respect to the Weyl-Heisenberg (WH) group, see Sect. \ref{WH}. The group-covariance means that there exist a group $G$, its projective unitary representation $G\ni g\to U_g\in\textnormal {U}(d)$ and a surjection $s:G\to \{1,\ldots,k\}$  such that $U_g\Pi_{s(h)}U_g^*=\Pi_{s(gh)}$. We call $|\phi_{s(e)}\rangle$ a \emph{fiducial vector} (where $e$ denotes the neutral element of $G$). Without loss of generality one can assume that $s(e)=1$. The assumption of WH-covariance is not very restrictive since all known SIC-POVMs are group-covariant and most of them are WH-covariant. In particular, if $d$ is prime then group-covariance implies WH-covariance \cite[Lemma 1]{Zhu}.

For which states of the system before the measurement, the uncertainty of the measurement outcomes is minimal? One of the ways to answer this question is to study the Shannon entropy of measurement $\Pi$, defined by:
$$H(\rho,\Pi):=\sum_{j=1}^k\eta(p_j(\rho,\Pi)),$$
for an initial state $\rho$, where  $\eta(x):=-x\ln x$ ($x>0$), $\eta(0)=0$. This quantity  has been already considered, e.g\  in the context of entropic uncertainty
principles \cite{Deu83,KriPar02,Mas07,WehWin10} and any lower bound for the entropy of
measurement can be regarded as an \textsl{entropic uncertainty relation for
single measurement} \cite{KriPar02}.

 The problem of minimizing entropy  is connected with the problem of maximization of the mutual information between
ensembles of initial states (classical-quantum states) and the POVM $\Pi$.
For an ensemble $V:=\{\pi_i,\rho_i\}_{i=1}^l$ of initial states $\rho_i$ with \emph{a priori} probabilities $\pi_i$ the mutual information between $V$ and $\Pi$ is given by
$$I(V,\Pi):=\sum_{i=1}^l\eta\left(\sum_{j=1}^k P_{ij}\right)+\sum_{j=1}^k\eta\left(\sum_{i=1}^l P_{ij}\right)-\sum_{i=1}^l\sum_{j=1}^k\eta(P_{ij}),$$
where $P_{ij}=\pi_i\tr(\rho_i\Pi_j)$ for $i=1,\ldots, l$ and $j=1,\ldots,k$.  The problem of maximization of $I(V,\Pi)$
consists of two dual aspects \cite{Arnetal13,Hol12,Hol13}: maximization over
all possible measurements, providing the ensemble $V$ is given, see, e.g.\
\cite{Hol73,Dav78,Sasetal99,Suzetal07}, and (less explored) maximization over
ensembles, when the POVM $\Pi$ is fixed \cite{Arnetal11,Ore}. We are interested in the second one, which allows us to answer the question how informative the measurement is, by looking for  the quantity called \emph{informational power} \cite{AAS}:
$$W(\Pi):=\sup_V I(V,\Pi).$$
An ensemble that maximizes the mutual information is called \emph{maximally informative} for $\Pi$.  In fact, it is enough to take into consideration ensembles consisting of pure states only \cite{AAS,Ore}. What is more, if $\Pi$ is group-covariant, then the maximizer can be found in the set of group-covariant ensembles, i.e. ensembles of the form $V(\rho):=\{|G|^{-1},U_g\rho U_g^*\}_{g\in G}$, where $\rho$ is a pure state \cite{Ore}. Additionally, the problems of finding the informational power of group-covariant measurement and of minimizing the entropy of such measurement are equivalent since in such situation we have
$$I(V(\rho),\Pi)=\ln |G|-H(\rho,\Pi)=:\tilde H(\rho,\Pi),$$
where $\tilde H(\rho,\Pi)$
is the relative entropy of $\Pi$ with respect to the uniform distribution, i.e.\ the relative entropy (or Kullback-Leibler divergence) of the probability distribution of measurement outcomes with respect to the uniform distribution. Note that $\tilde H$ measures non-uniformity of the distribution of the measurement outcomes and `can be interpreted as a measure of knowledge, as against uncertainty' \cite{SriBan}. Indeed, the greater $\tilde H$ is, the more we know about the measurement outcomes. 

Both minimum entropy of $\Pi$ and its informational power can be interpreted in terms of
 a \emph{quantum-classical channel} $\Phi:\mathcal{S}\left(  \mathbb{C}^{d}\right)  \rightarrow\mathcal{S}\left(
\mathbb{C}^{k}\right)  $ generated by $\Pi$ and  given by $\Phi\left(
\rho\right)  =\sum_{j=1}^{k}\operatorname{tr}\left(  \rho\Pi_{j}\right)
\left|  e_{j}\right\rangle \left\langle e_{j}\right|  $, where $\left(
\left|  e_{j}\right\rangle \right)  _{j=1}^{k}$ is any orthonormal basis in
$\mathbb{C}^{k}$. The \textsl{minimum output entropy} of $\Phi$  is  equal to the minimum entropy of $\Pi$, i.e.\ $\min_\rho S(\Phi(\rho))=\min_\rho H(\rho,\Pi)$ \cite{Sho02}, where $S$ denotes the von Neumann entropy. Moreover, the informational power of $\Pi$ can be identified \cite{Hol12b,Ore} as the \textsl{classical capacity} $\chi(\Phi)$ of the channel $\Phi$, i.e.\ $$W(\Pi)=\chi(\Phi):=\max_{V=\{\pi_i,\rho_i\}}\left\{S\left(\sum_i \pi_i\Phi(\rho_i)\right)-\sum_i \pi_i S(\Phi(\rho_i))\right\}.$$

While working on this paper, we have found out that the informational power for an exemplary SIC-POVM in dimesion 3 had been independently calculated by Dall'Arno et al. \cite{DAr}. Nevertheless, our results work for an arbitrary group covariant SIC-POVM in dimension 3. Moreover, they seem to give a more complete characterization of the maximizers as they provide a deeper insight into their geometric and algebraic structure. In particular, the  group-covariant maximally informative ensembles are described in details.

\section{Weyl-Heisenberg SIC-POVMS}\label{WH}

Let us denote an orthonormal basis in $\mathbb C^d$ by $|e_0\rangle,|e_1\rangle,\dots|e_{d-1}\rangle$. We define the unitary operators $T$ and $S$ as follows:
$$T|e_r\rangle:=\omega^r|e_r\rangle,\quad S|e_r\rangle:=|e_{r\oplus 1}\rangle,$$
where $r=0,\ldots,d-1$, $\oplus$ denotes the addition modulo $d$ and $\omega:=\exp(2\pi i/d)$. We define also $D_\p=D_{(p_1,p_2)}:=\tau^{p_1p_2}S^{p_1}T^{p_2}$, where $\tau:=-\exp(\pi i/d)$ and $\p\in\mathbb Z^2$. We have $$D_\p^*=D_{-\p}$$ and
$$D_{\p}D_{\mb q}=\tau^{\langle\p,\mb q\rangle}D_{\p+\mb q}$$
for all $\p, \mb q\in\mathbb Z^2$, where $\langle\p,\mb q\rangle=p_2q_1-p_1q_2$ is a symplectic form.

The finite Heisenberg group  (called also the finite Weyl-Heisenberg group or the generalized Pauli group) is irreducibly and faithfully represented by the elements of the form $\omega^{p_3}D_{(p_1,p_2)}$, where $p_1,p_2,p_3\in\mathbb Z_d$. The map $\mathbb Z_d\times\mathbb Z_d\ni(p_1,p_2)\mapsto D_{(p_1,p_2)}$ defines the projective unitary representation of $\mathbb Z_d\times\mathbb Z_d$ on $\mathbb C^d$ and matrices $D_\p$ are called Weyl matrices or generalized Pauli matrices.

Let us consider the normalizer of the Weyl-Heisenberg group in the group $\textnormal{UA}(d)$ of all unitary and antiunitary operators on $\mathbb C^d$, the so-called \emph{extended Clifford group} $\textnormal{EC}(d)$. We denote by  $\ESL(2,\mathbb Z_d)$ the group of all $2\times 2$ matrices over $\mathbb Z_{d}$ with determinant $\pm 1$ (mod $d$) and by $\textnormal I(d)$ the group of unitary multiples of identity operator on $\mathbb C^d$. 
Appleby \cite[Thm 2]{App} has shown that for odd dimension $d$ there exists a unique isomorphism $f_E:\ESL(2,\mathbb Z_d)\ltimes(\mathbb Z_d)^2\to \textnormal{EC}(d)/\textnormal{I}(d)$ fulfiling the condition 
$$UD_\p U^*=\omega^{\langle\mb r,\mathcal F\mb p\rangle}D_{\mathcal F\mb p}$$
for any $(\mathcal F,\mb r)\in\ESL(2,\mathbb Z_d)\ltimes(\mathbb Z_d)^2$, $U\in f_E(\mathcal F,\mb r)$ and $\p\in(\mathbb Z_d)^2$.

Let us consider a unitary $U_{(\mathcal F,\mb r)}\in f_E(\mathcal F,\mb r)$, such that $\det\mathcal F=1$,  $\tr\mathcal F\equiv-1$ (mod $d$), $\mathcal F\neq\mathcal I$. One can choose the phase factor of $U_{(\mathcal F,\mb r)}$  in a way that $(U_{(\mathcal F,\mb r)})^3=\mathbb I$ \cite{App}. Such unitary will be called (following Appleby) a {\it canonical order $3$ unitary}.

It has been conjectured by Zauner \cite{Zau} that in every dimension there exists a fiducial vector for some WH-covariant SIC-POVM which is an eigenvector of the canonical order 3 unitary $U_\mathcal Z:=U_{(\mathcal Z,0)}$, where 
$\mathcal Z=\left(\begin{smallmatrix}
0 & -1 \\
1 & -1
\end{smallmatrix}\right)$ has been later referred to as the \emph{Zauner matrix}. This has been strengthen by Appleby \cite{App} to the conjecture that in every dimension there exists a fiducial vector and every such vector is an eigenvector of a canonical order 3 unitary conjugated to $U_\mathcal Z$ (the conjugacy relation is considered up to a phase in the extended Clifford group). Grassl \cite{Gra} gave a counter-example to the latter one in dimension 12, but there are not known counter-examples in the other dimensions. Still, another conjecture by Appleby, that in every dimension there exists a fiducial vector and every such vector is an eigenvector of a canonical order 3 unitary, remains open. Let us recall also that these two conjectures are equivalent in the prime dimensions greater than three, since in these dimensions all canonical order 3 unitaries are in the same conjugacy class \cite{Fla}.  

 We consider a group-covariant SIC-POVM in dimension 3. As mentioned in the introduction, every such SIC-POVM must be group-covariant with respect to the Weyl-Heisenberg group.

\section{Informational power of SIC-POVM}

The first result we present here is strictly connected with the geometry of SIC-POVM in dimension 3 and does not involve any algebraic structure. Though, the assumption of this theorem concerning linear dependency among the vectors defining a SIC-POVM is not at any rate obvious. However, it follows from \cite[Thm~1]{Ben} that this assumption is fulfilled if a SIC-POVM is covariant with respect to the Weyl-Heisenberg group and its fiducial vector is an eigenvector of certain canonical order 3 unitary conjugated to $U_{\mathcal Z}$, which is not a~huge restriction since all known SIC-POVMs in dimension three are of this form. Second theorem gives us a deeper insight into the algebraic structure of some entropy minimizers.

\begin{Th}\label{geom}
Let $\Pi=\{(1/3)|\phi_j\rangle\langle\phi_j|\}_{j=1}^9$ be a SIC-POVM in dimension 3 and let us assume that some three out of nine vectors $|\phi_j\rangle$ are linearly dependent. Then the state  $|\psi\rangle\langle\psi|$, where $|\psi\rangle$ is orthogonal to the two-dimensional subspace spanned by these vectors,  minimizes (resp.\ maximizes) the entropy of $\Pi$ (resp. the relative entropy of $\Pi$). Moreover, all global minimizers (resp.\ maximizers) are of this form.
\end{Th}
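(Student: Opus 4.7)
The plan is to reduce the theorem to an elementary scalar inequality. Since $\rho\mapsto H(\rho,\Pi)$ is concave in $\rho$, its minimum over density matrices is attained on a pure state, so we may restrict to $\rho=|\psi\rangle\langle\psi|$ and set $p_j:=\operatorname{Tr}(\rho\Pi_j)=\tfrac{1}{3}|\langle\psi|\phi_j\rangle|^2$. Two moment constraints are then immediate: $\sum_{j=1}^9 p_j=1$ because $\Pi$ is a POVM, and, via the $2$-design identity $\sum_j|\phi_j\rangle\langle\phi_j|^{\otimes 2}=\tfrac{2d}{d+1}P_{\mathrm{sym}}$ characterizing SIC-POVMs,
\[
\sum_{j=1}^9 p_j^2=\tfrac{1}{9}\sum_j|\langle\psi|\phi_j\rangle|^4=\tfrac{2}{d(d+1)}=\tfrac{1}{6}.
\]

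The analytic heart of the argument is the pointwise inequality
\[
-x\ln x \;\geq\; x\ln 6 + x - 6x^2 \qquad (x\geq 0),
\]
with equality precisely at $x\in\{0,1/6\}$. The quadratic on the right is the tangent majorant to $-x\ln x$ at $x=1/6$ (its coefficients are forced by matching value and slope there, together with tangency at $x=0$); after dividing by $x>0$ and substituting $u=6x$ the inequality reduces to the standard $u-\ln u-1\geq 0$, saturated only at $u=1$. Summing over $j=1,\ldots,9$ and inserting the two moment constraints yields
\[
H(\rho,\Pi)=\sum_{j=1}^9(-p_j\ln p_j)\;\geq\;(\ln 6)\cdot 1+1-6\cdot\tfrac{1}{6}=\ln 6,
\]
with equality if and only if every $p_j\in\{0,1/6\}$.

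Combined with $\sum_j p_j=1$, this equality condition forces exactly three of the $p_j$ to vanish and the remaining six to equal $1/6$. Since $p_j=0$ amounts to $\langle\psi|\phi_j\rangle=0$, a minimizer $|\psi\rangle$ must be orthogonal to three of the fiducial vectors; being a nonzero vector in $\mathbb{C}^3$, it forces the corresponding triple $|\phi_{j_1}\rangle,|\phi_{j_2}\rangle,|\phi_{j_3}\rangle$ to span a subspace of dimension at most two, hence to be linearly dependent. Conversely, the hypothesis of the theorem supplies such a triple: for any $|\psi\rangle$ in the one-dimensional orthogonal complement of their span, the remaining six $p_j$ are nonnegative, sum to $1$, and have squared sum $1/6$, so Cauchy--Schwarz saturates and all equal $1/6$. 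Thus the lower bound $\ln 6$ is actually attained, the minimizers are exactly as claimed, and the relative-entropy statement follows at once from the identity $\tilde H(\rho,\Pi)=\ln 9-H(\rho,\Pi)$ recorded in the introduction.

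The main obstacle is spotting the correct tangent majorant for $-x\ln x$; the right candidate emerges by setting up the Lagrangian for minimizing $\sum_j\eta(p_j)$ under the two moment constraints, observing that the uniform distribution supported on six of the nine coordinates satisfies the KKT stationarity conditions, and reading off the multipliers from that stationary point. Once the inequality is in hand, everything else is essentially bookkeeping and the identification $p_j=0\Leftrightarrow|\psi\rangle\perp|\phi_j\rangle$.
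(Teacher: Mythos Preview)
Your argument is correct, and at its core it is the \emph{same} mechanism as the paper's: a quadratic lower bound for $\eta$ that touches at the two values $0$ and $1/6$, combined with the first and second moment constraints on the probabilities, followed by identification of the equality cases. The paper packages this as Hermite interpolation of $h(t)=\eta((1+2t)/9)$ at $t=-1/2$ and $t=1/4$ in the Bloch picture and then uses the tight-frame identity $\sum_j(u\cdot v_j)^2=\tfrac{9}{8}\|u\|^2$ to make the summed polynomial constant on the sphere; after the change of variables $p_j=(1+2t)/9$, the paper's interpolating quadratic is exactly your $x\ln 6+x-6x^2$, and the tight-frame identity is precisely the $2$-design statement $\sum_j p_j^2=1/6$ you invoke.

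What you do differently is strip away the Bloch-space geometry and work directly with the probabilities, which makes the argument shorter and entirely self-contained (no $8$-simplex, no appeal to the structure of $\mathbf{B}\subset S^7$). Your verification that the candidate state actually attains $\ln 6$ via Cauchy--Schwarz on the six surviving probabilities is also cleaner than the paper's route through the affine subspaces $\pi_1\subset\pi_2$. Conversely, the paper's Bloch/Hermite framework is designed to generalize: the same template applies to other highly symmetric POVMs where the relevant probability values are not known in advance, whereas your scalar inequality is tailored to the specific pair $\{0,1/6\}$. The paper itself notes after the proof that the bound $\ln 6$ can alternatively be read off from Rastegin's entropic uncertainty relations; your argument is essentially a direct derivation of that bound together with its equality analysis.
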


\begin{proof}
Let us assume that $|\phi_1\rangle,|\phi_2\rangle$ and $|\phi_3\rangle$ are linearly dependent.
We will consider the Bloch representation of quantum states. We can represent our SIC-POVM on the generalized Bloch set ${\bf B}\subset S^7$ (unit sphere) by the set of vertices of a regular 8-simplex, which we denote by $B:=\{v_1,v_2,\ldots,v_9\}$. Inner products of vectors $\psi_1,\psi_2\in\mathbb C^3$ and the corresponding Bloch vectors $u_1,u_2\in\mathbb R^8$ are related in the following way: $|\langle\psi_1|\psi_2\rangle|^2=(2(u_1\cdot u_2)+1)/3$. In particular, if $|\psi_1\rangle$ and $|\psi_2\rangle$ are orthogonal, then $u_1\cdot u_2=-1/2$. Let us consider the five-dimensional affine subspace $\pi_1$:
\begin{equation*}%\label{hyper}
\pi_1:=\{u\in\mathbb R^8|u\cdot v_1=u\cdot v_2=u\cdot v_3=-1/2\}
\end{equation*}
and the affine hyperplane $\pi_2$ tangent to the sphere $S^7$ at the point $w:=-(v_1+v_2+v_3)/\|v_1+v_2+v_3\|$:
\begin{equation*}
 \pi_2:=\{w+u_0|u_0\in\mathbb R^8, u_0\cdot w=0\}=\{u\in\mathbb R^8|u\cdot w=1\}.
\end{equation*}
It is easy to check that $w\in\pi_1\subset\pi_2$. Thus $w$ needs necessarily to be the Bloch vector corresponding to $|\psi\rangle$. For $j\in\{4,\ldots,9\}$ we get $w\cdot v_j=1/4$.

We now apply the method based on the Hermite interpolation described in details in \cite{Slo}. Firstly, we redefine the entropy of $\Pi$ to be a function of Bloch vectors: 
$$H_{\bf B}(u):=H(\rho,\Pi)=\sum_{j=1}^{9}\eta\left(\frac{1+2u\cdot v_j}{9}\right)=\sum_{j=1}^{9}h(u\cdot v_j),$$ where $u$ is the Bloch vector corresponding to $\rho$ and $h:[-1/2,1]\to\mathbb R^+$ is given by $h(t)=\eta\left(\frac{1+2t}{9}\right)$. We are looking for the interpolating Hermite polynomial $p$ such that $p(-1/2)=h(-1/2)$, $p(1/4)=h(1/4)$ and $p'(1/4)=h'(1/4)$.  What is crucial here is that $p$ interpolates $h$ \emph{from below}. The degree of $p$ is at most 2, so it is the degree of the polynomial function $P : {\bf B} \to \mathbb R$ given by $P(u)=\sum_{j=1}^9p\left(u\cdot v_j\right)$ for $u\in {\bf B}$.  As $\sum_{j=1}^9v_j=0$, the linear part vanishes. Additionally, since $\sum_{j=1}^9(u\cdot v_j)^2=\frac{9}{8}\|u\|^2$ for any $u\in\mathbb R^8$ (vertices of regular $N$-simplex in $\mathbb R^N$ form a tight frame in $\mathbb R^N$ with bound $N/(N-1)$),  $P$ must be constant on any sphere. Knowing that $P(u)\leq H_{\bf B}(u)$ and $P(w)=H_{\bf B}(w)\ (=\ln 6)$ we conclude that the entropy attains its minimum value (and so the relative entropy $\tilde H$ attains its maximum value) at  $|\psi\rangle\langle\psi|$.

In order to show that all global minimizers of the entropy (and so maximizers of the relative entropy) are of the same form, i.e.\ they are orthogonal to some three out of nine vectors defining SIC-POVM, let us observe that if $\tilde w\in {\bf B}$ is a
global minimizer for $H_{\bf B}$, then $H_{\bf B}(\tilde w)=P(\tilde w)=\ln 6$. In consequence also $h(\tilde w\cdot v_j)=p(\tilde w\cdot v_j)$ for $j=1,\ldots,9$, and so
$\{\tilde w\cdot v_j|j=1,\ldots,9\}\subset\{w\cdot v_j|j=\nolinebreak 1,\ldots,9\}=\{-1/2,1/4\}$, since $p$ agrees with $h$ exactly in the points of interpolation. Under the constraint $\sum_{j=1}^9 \tilde{w}\cdot v_j=0$ we get that $\{\tilde w\cdot v_j|j=1,\ldots,9\}=\{-1/2,1/4\}$ and there are exactly three $j$'s such that $\tilde w\cdot v_j=-1/2$, i.e.\ there are exactly three vectors $|\phi_j\rangle$ orthogonal to $|\tilde \psi\rangle$.\qedhere

\end{proof}

Note that it is possible to give an alternative proof  that the state $|\psi\rangle\langle\psi|$ minimizes the  entropy. It is enough to notice that the lower bound for the Shannon entropy, namely $\ln 6$, provided by Rastegin \cite[Prop. 6]{Ras} is satisfied here, as it was done in \cite[Corollary~2]{DAr}. Moreover, we can consider as well some generalized entropies as it turns out that both the lower bounds for the Tsallis $\alpha$-entropies for $\alpha\in(0,2]$, i.e. $(1-\alpha)^{-1}(6^{1-\alpha}-1)$, given in \cite[Prop. 6]{Ras}  and the lower bound $\ln 6$ for the R\'{e}nyi $\alpha$-entropies for $\alpha\in(0,2]$ given as the corollary from \cite[Prop. 7]{Ras}  are achieved. However, the dimension three can be exceptional, in the sense that the lower bounds may not be satisfied in higher dimensions, as indicated by some preliminary numerical calculations in dimensions four to six. On the other hand, the method based on the Hermite interpolation seems to be applicable also in the higher dimensions.

\begin{Th}
Let $U_{(\mathcal G,\mb q)}$ be a canonical order 3 unitary conjugated (up to a phase in the extended Clifford group) to $\uz$. 
Then the relative entropy of 3-dimensional WH SIC-POVM, whose fiducial vector $\fid$ is an eigenvector of  $U_{(\mathcal G,\mb q)}$ is maximized in the eigenstates of Weyl matrix $D_{\mb s}$, where $\mb s\neq(0,0)$ satisfies $\mathcal G\mb s=\mb s$.
\end{Th}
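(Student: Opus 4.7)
The plan is to combine Theorem~\ref{geom} with the commutation structure between $D_{\mb s}$ and $U_{(\mathcal G,\mb q)}$. By Theorem~\ref{geom}, every maximizer of the relative entropy is orthogonal to a triple of linearly dependent SIC vectors, so it suffices to produce, for each eigenvector of $D_{\mb s}$, such a triple orthogonal to it.

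First I would show that $D_{\mb s}$ commutes with $U_{(\mathcal G,\mb q)}$. Appleby's transformation rule, together with $\mathcal G\mb s=\mb s$, gives $U_{(\mathcal G,\mb q)} D_{\mb s} U_{(\mathcal G,\mb q)}^*=\omega^{\langle \mb q,\mb s\rangle}D_{\mb s}$, so commutation reduces to $\langle \mb q,\mb s\rangle\equiv 0\pmod{3}$. Writing the conjugacy hypothesis as $(\mathcal G,\mb q)=(\mathcal H,\mb r)(\mathcal Z,0)(\mathcal H,\mb r)^{-1}$ in $\ESL(2,\mathbb Z_3)\ltimes(\mathbb Z_3)^2$ yields $\mb q=(\mathbb I-\mathcal G)\mb r$; since $\det\mathcal G=1$ the symplectic form is $\mathcal G$-invariant, and combined with $\mathcal G\mb s=\mb s$ this gives $\langle\mathcal G\mb r,\mb s\rangle=\langle\mb r,\mb s\rangle$, whence $\langle\mb q,\mb s\rangle=0$.

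Next I would analyse the joint spectrum. A short computation shows $\tr(D_{\mb s})=0$ for $\mb s\neq\mathbf 0$; together with $D_{\mb s}^3=\mathbb I$ this forces $D_{\mb s}$ to have three distinct eigenvalues $1,\omega,\omega^2$ with one-dimensional eigenspaces. Were $U_{(\mathcal G,\mb q)}$ also to have simple spectrum, commutation would make $\fid$ a joint eigenvector and in particular an eigenvector of $D_{\mb s}$; but then $|\phi_{\mb s}\rangle=D_{\mb s}\fid$ would be proportional to $\fid$, contradicting the SIC identity $|\langle\phi_{\mb s}|\phi_1\rangle|^2=1/4$. Hence, being a non-scalar unitary of order $3$, $U_{(\mathcal G,\mb q)}$ must have eigenvalue multiplicities $(2,1,0)$ in some order, and the same obstruction places $\fid$ inside the two-dimensional block. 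The complementary one-dimensional $U$-eigenspace is then orthogonal to $\fid$ and is itself $D_{\mb s}$-invariant, so it is spanned by some eigenvector $|e_{k_0}\rangle$ of $D_{\mb s}$.

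Finally I would propagate this single orthogonality to all SIC vectors via Weyl covariance. The commutation $D_{\mb s}D_{-\mb p}=\omega^{-\langle\mb s,\mb p\rangle}D_{-\mb p}D_{\mb s}$ gives $D_{-\mb p}|e_k\rangle\propto|e_{k-\langle\mb s,\mb p\rangle}\rangle$, so $\langle e_k|\phi_{\mb p}\rangle=0$ exactly when $k\equiv k_0+\langle\mb s,\mb p\rangle\pmod{3}$. The functional $\mb p\mapsto\langle\mb s,\mb p\rangle$ on $(\mathbb Z_3)^2$ has kernel $\langle\mb s\rangle$, so its three fibres partition the nine SIC vectors into three triples, each orthogonal to a single eigenvector of $D_{\mb s}$ and therefore linearly dependent; Theorem~\ref{geom} then identifies every eigenstate of $D_{\mb s}$ as a maximizer of the relative entropy. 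The main obstacle is the spectral step---pinning down the multiplicities $(2,1,0)$ for $U_{(\mathcal G,\mb q)}$ and the placement of $\fid$ in the two-dimensional block; an alternative, more direct route invokes \cite[Thm~1]{Ben} to supply the linear dependencies and then identifies the Hesse triples with the orbits of $D_{\mb s}$.
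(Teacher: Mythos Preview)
Your argument is correct and shares the paper's core strategy: establish that $D_{\mb s}$ commutes with $U_{(\mathcal G,\mb q)}$ via $\mb q=(\mathbb I-\mathcal G)\mb r$ and symplectic invariance, use the eigenspace structure of $U_{(\mathcal G,\mb q)}$ to force a linear dependency among three SIC vectors, and then invoke Theorem~\ref{geom}. The execution differs in two places. First, the paper simply records that $U_{\mathcal Z}$ (hence $U_{(\mathcal G,\mb q)}$) has eigenspace dimensions $(2,1)$ as a known fact, whereas you derive this internally: if the spectrum were simple, commutation would force $\fid$ to be an eigenvector of $D_{\mb s}$, contradicting $|\langle\phi_{\mb s}|\phi_1\rangle|^2=1/4$; the same obstruction puts $\fid$ in the two-dimensional block. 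Second, to pass from the single maximizer in $\mathcal H_1$ to all eigenvectors of $D_{\mb s}$, the paper quotes \cite[Thm~2.2]{Ban} (the WH orbit of an eigenvector of $D_{\mb s}$ is its full eigenbasis) together with the WH-invariance of $\tilde H$, while you instead compute the Weyl commutator $D_{\mb s}D_{-\mb p}=\omega^{-\langle\mb s,\mb p\rangle}D_{-\mb p}D_{\mb s}$ directly and read off that each eigenvector $|e_k\rangle$ is orthogonal to exactly the three SIC vectors on the fibre $\langle\mb s,\mb p\rangle\equiv k-k_0$, so Theorem~\ref{geom} applies to each one separately. Your route is more self-contained (no external spectral facts or citations beyond Theorem~\ref{geom}) and makes the partition of the nine SIC vectors into three dependent triples fully explicit; the paper's route is shorter but leans on outside input.
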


\begin{proof}
Operators $U_{(\mathcal G,\mb q)}$ and $\uz$ are conjugated if and only if there exists $(\mathcal F,\mb r)\in ESL(2,\mathbb Z_3)\ltimes\mathbb (Z_3)^2$ such that $\mathcal G=\mathcal{FZF}^{-1}$ and $\mb q=(\mathcal I-\mathcal G)\mb r$. 
Now, if $\p$ is a non-zero fixed point of $\mathcal Z$ (thus $\p=(1,2)$ or $\p=(2,1)$), then $\mb s=\mathcal F\p$ is a non-zero fixed point  of $\mathcal G$.
Let us observe that $D_\mb s$ (and so $D_\mb s^2=D_{2\mb s}$) commutes with $U_{(\mathcal G,\mb q)}$:
$$U_{(\mathcal G,\mb q)}D_{\mb s}=\omega^{\langle \mb q,\mathcal G\mb s\rangle}D_{\mb s}U_{(\mathcal G,\mb q)}=D_{\mb s}U_{(\mathcal G,\mb q)},$$
since $\langle \mb q,\mathcal G\mb s\rangle=\langle \mb r-\mathcal G\mb r,\mathcal G\mb s\rangle=\langle\mb r,\mb s\rangle-\langle\mathcal G\mb r,\mathcal G\mb s\rangle=\langle\mb r,\mb s\rangle-\langle\mb r,\mb s\rangle=0$.
We consider the set $S$ consisting of the vectors $\fid$, $D_{\mb s}\fid$ and $D_{2\mb s}\fid$. By commutativity they all belong to the same eigenspace of $U_{(\mathcal G,\mb q)}$. 
Since $\uz$ has two eigenspaces: one-dimensional and two-dimensional, so has $U_{(\mathcal G,\mb q)}$ and we will refer to them by $\mathcal H_1$ and $\mathcal H_2$. Thus vectors from $S$ must be linearly dependent, and since they are not colinear, they span $\mathcal H_2$.
Let us take any $|\psi\rangle\in\mathcal H_1$. It is obviously orthogonal to the above vectors, and so from Theorem \ref{geom} we get that it maximizes the relative entropy of $\Pi$.
There exists a common eigenbasis for $D_\mb s$ and $U_{(\mathcal G,\mb q)}$, thus  $|\psi\rangle$ is also an eigenvector of $D_\mb s$. Since the orbit under the action of WH group of an eigenvector of any operator $D_\p$ is an eigenbasis of this operator \cite[Thm 2.2]{Ban}, the theorem is proven. \qedhere

\end{proof}

It is worth to notice that the above theorems are not equivalent. Let us consider a family of SIC-POVMs parametrized by $t\in[0,\pi/3]$ and generated by the following vectors: $(0,1,-e^{it}\eta^j)$, $(-e^{it}\eta^j,0,1)$, $(1,-e^{it}\eta^j,0)$, $j=0,1,2$, where $\eta:=e^{2\pi i/3}$.
For every $t\in[0,\pi/3]$ the fiducial vector $\fid:=(0,1,-e^{it})$ is an eigenvector of unitary $U_{(\mathcal G,0)}$ for $\mathcal G=\left(\begin{smallmatrix} 1 & 0\\ 1& 1\end{smallmatrix}\right)$. Thus $\fid,D_{\mb s}\fid$ and $D_{\mb s}^2\fid$, where $\mb s:=(0,1)$ is a fixed point of $\mathcal G$, are linearly dependent and the maximum relative entropy is attained in the eigenstates of operator $D_{\mb s}$.  In consequence, a WH-covariant maximally informative ensemble consists of the eigenbasis of $D_{\mb s}$. Nevertheless, there are two special cases: $t=0$ and $t=2\pi/9$, described in details in \cite[Sect.~3]{Ben}. In the former, $\fid$ is an eigenvector of every symplectic canonical order 3 unitary (i.e.~one of the form $U_{(\mathcal F,0)}$)., so the maximum relative entropy is attained at the eigenstates of any Weyl matrix. Thus the maximizers form a set of four MUBs (mutually unbiased bases) \cite[Thm 2.3]{Ban} and there are four WH-covariant maximally informative ensembles, each consisting of different eigenbases of Weyl matrices.  In the latter case, we get additional linear dependencies that do not arise from the eigenspace of any canonical order 3 unitary, e.g. between vectors $\fid,D_{(1,2)}\fid$ and $D_{(2,0)}\fid$, see also \cite{Ben}. It turns out that the orbit under the action of WH group of the vector orthogonal to this additional subspace consists of three MUBs  and together with the maximizers described in Theorem 2 form a set of four MUBs. Therefore there are two WH-covariant maximally informative ensembles: one consisting of eigenbasis of $D_{\mb s}$ and second one consisting of three MUBs indicated above.

\section{Final remarks}

The natural question is whether the analogues of the results presented above may hold in higher dimensions. Dang et al. \cite[Thms 1-3]{Ben} have shown that some linear dependencies  may arise in any dimension $d$, although if $d$ is not divisible by three we may not found a fiducial vector in a proper eigenspace of a canonical order three unitary. Some numerical calculations in dimensions four to six show that one can find a vector of higher relative entropy than the ones predicted by the potential analogues of theorems presented in this paper. 
However, it is still possible that the global maximizers of the relative entropy of WH SIC-POVMs possess some special algebraic properties that may be revealed during further study.

At the end, let us take a look at a SIC-POVM in dimension 2 (vertices of the tetrahedron in the Bloch representation). It is known \cite{AAS,DAr,Ore,Slo} that in this case the maxima of the relative entropy are located in the states constituting the dual SIC-POVM. From the algebraic point of view, all these 8 states lie on the single orbit under the action of the extended Clifford group $\textnormal {EC}(2)$, in consequence, they have the same stabilizers (up to conjugacy). Moreover, these stabilizers are maximal, \cite{Slo}. Note that in dimension 3 the maximizers are not located on the orbit of SIC-POVM, but they have larger stabilizers in the extended Clifford group $\textnormal {EC}(3)$ than the elements of SIC-POVM with the only exception appearing when the fiducial vector is an eigenvector of every symplectic canonical order 3 unitaries (case $t=0$ in the example presented at the end of the previous section).

\section{Acknowledgements}

The author is grateful to Wojciech S\l omczy\'{n}ski for helpful discussions and valuable suggestions for improving this paper. This research was supported by Grant No. N N202 090239 of the Polish Ministry of Science and Higher Education.


\begin{thebibliography}{99}
 \bibitem{App} D.M.\ Appleby, Symmetric informationally complete-positive operator valued measures and the extended Clifford group, J.\ Math.\ Phys.\ \textbf{46}, 052107 (2005)
 
 \bibitem{Arnetal11} M.\ Dall'Arno, G.M.\ D'Ariano, M.F.\ Sacchi, Informational power
of quantum measurement, Phys.\ Rev.\ A \textbf{83}, 062304 (2011)

\bibitem{Arnetal13} M.\ Dall'Arno, G.M.\ D'Ariano, M.F.\ Sacchi, How much a Quantum Measurement is Informative? Preprint.

\bibitem{AAS} M.\ Dall'Arno, G.M.\ D'Ariano, M.F.\ Sacchi, Informational power of quantum measurement, Phys.\ Rev.\ A \textbf{83}, 062304 (2011)

\bibitem{DAr} M.\ Dall'Arno, F.\ Buscemi, M.\ Ozawa, Tight bounds on the accessible information and the informational power, J.\ Phys.\ A {\bf 47}, 235302 (2014)

\bibitem{Ban} S.\ Bandyopadhyay, P.O.\ Boykin, V.\ Roychowdhury, F.\ Vatan, A new proof for the existence of mutually unbiased bases, Algorithmica \textbf{34}, 512--528 (2002)

\bibitem{Ben} H.B.\ Dang, K.\ Blanchfield, I.\ Bengtsson, D.M.\ Appleby, Linear dependencies in Weyl-Heisenberg orbits, Quantum Inf.\ Process.\ \textbf{12}, 3449--3475 (2013)

\bibitem{Dav78} E.B.\ Davies,  Information and quantum measurement, IEEE Trans.\ Inf.\ Theory \textbf{24}, 596--599 (1978)

\bibitem{Deu83} D.\ Deutsch, Uncertainty in quantum measurements, Phys.\ Rev.\ Lett.\ \textbf{50}, 631--633 (1983)

\bibitem{Fla} S.T.\ Flammia, On SIC-POVMs in prime dimensions, J.\ Phys.\ A \textbf{39}, 13483--13493 (2006)

\bibitem{Gra} M.\ Grassl, Tomography of quantum states in small dimensions, Electron.\ Notes Discrete Math.\ \textbf{20}, 151--164 (2005)

\bibitem{Hol73} A.S.\ Holevo,  Bounds for the quantity of information transmitted
by a quantum communication channel, Probl.\ Inf.\ Transm.\ \textbf{9}, 177--183 (1973)

\bibitem{Hol12} A.S.\ Holevo, Quantum systems, channels, information. A
mathematical introduction, De Gruyter, Berlin (2012)
\bibitem{Hol12b} A.S. Holevo,  Information capacity of a quantum observable,
Probl.\ Inf.\ Transm.\ \textbf{48}, 1--10 (2012)

\bibitem{Hol13} A.S. Holevo,  Information capacities of quantum measurement
channels, Phys.\ Scr.\ \textbf{T153}, 013034 (2013)

\bibitem{KriPar02} M.\ Krishna, K.R.\ Parthasarathy, An entropic uncertainty
principle for quantum measurements, Sankhy\={a} \textbf{64}, 842--851 (2002)

\bibitem{Mas07} S.\ Massar,  Uncertainty relations for positive operator valued
measures, Phys.\ Rev.\ A \textbf{76}, 042114 (2007)

\bibitem{Ore} O.\ Oreshkov, J.\ Calsamiglia, R.\ Mu\~{n}oz-Tapia, E.\ Bagan, Optimal signal states for quantum detectors, New J.\ Phys.\ \textbf{13}, 073032 (2011)

\bibitem{Pru} E.\ Prugove\v{c}ki, Information-theoretic aspects of quantum measurement, Int.\ J.\ Theor.\ Phys.\ \textbf{16}, 321--331 (1977)

\bibitem{Ras} A.E.\ Rastegin, Uncertainty relations for MUBs and SIC-POVMs in terms of generalized entropies, Eur.\ Phys.\ J.\ D {\bf 67}, 269 (2013)

\bibitem{Renes} J.M.\ Renes, Equiangular spherical codes in quantum cryptography, Quant.\ Inf.\ Comput.\ {\bf 5}, 080--091 (2005)

\bibitem{Ren} J.M.\ Renes, R.\ Blume-Kohout, A.J.\ Scott, C.M.\ Caves, Symmetric Informationally complete quantum measurements, J.\ Math.\ Phys.\ \textbf{45}, 2171--2180 (2004)

\bibitem{Sasetal99} M.\ Sasaki, S.M.\ Barnett, R.\ Jozsa, M.\ Osaki, O.\ Hirota, Accessible information and optimal strategies for real symmetrical quantum sources, Phys.\ Rev.\ A \textbf{59}, 3325--3335 (1999)

\bibitem{Scott} A.J.\ Scott, Tight informationally complete quantum measurements,
J.\ Phys.\ A \textbf{39}, 13507--13530 (2006)

\bibitem{Sho02} P.W.\ Shor, Additivity of the classical capacity of entanglement-breaking quantum channels, J.\ Math.\ Phys.\ \textbf{43}, 4334--4340
(2002)

\bibitem{Slo} W.\ S\l omczy\'nski, A.\ Szymusiak, Highly symmetric POVMs and their informational power, arXiv:1402.0375 [quant-ph]

\bibitem{SriBan} R.\ Srikanth, S.\ Banerjee, Complementarity in atomic (finite-level quantum)
systems: an information-theoretic approach, 
Eur.\ Phys.\ J.\ D, {\bf 53}, 217--227 (2009)

\bibitem{Suzetal07} J.\ Suzuki, S.M.\ Assad, B.-G.\ Englert, \textit{Accessible information
about quantum states: An open optimization problem}. In: G.\ Chen, L.\ Kauffman, S.J.\ 
Lomonaco, (eds.) Mathematics of Quantum Computation and Quantum Technology, pp. 309--348. Chapman \& Hall/CRC, Boca Raton, FL (2007)

\bibitem{WehWin10} S.\ Wehner, A.\ Winter,  Entropic uncertainty relations -- a
survey, New J.\ Phys.\ \textbf{12}, 025009 (2010)

\bibitem{Zau} G.\ Zauner, \textit{Quantendesigns.Grundz\"{u}ge einer nichtkommutativen Designtheorie} PhD thesis, Univ. Wien (1999); English translation, Quantum designs: foundations of a non-commutative design theory, Int.\ J.\ Quant.\ Inf.\ {\bf 9}, 445--507 (2011)

\bibitem {Zhu} H.\ Zhu, SIC POVMs and Clifford groups in prime dimensions, J.\ Phys.\ A \textbf{43}, 305305 (2010)

\bibitem{ZhuEnglert} H.\ Zhu, B.-G.\ Englert, Quantum state tomography with fully symmetric measurements and product measurements, Phys.\ Rev.\ A {\bf 84}, 022327 (2011)
\end{thebibliography}
\end{document}